\title{Quantification of Disaggregation Difficulty with Respect to the Number of Meters}
\author{
    Elnaz~Azizi\\
    Dept. of Electrical and Computer Engineering\\
    Tarbiat Modares University\\
    Tehran, Iran\\
    \texttt{e.azizi@modares.ac.ir}
\And
    Mohammad~T~H~Beheshti\\
    Dept. of Electrical and Computer Engineering\\
    Tarbiat Modares University\\
    Tehran, Iran\\
    \texttt{mbehesht@modares.ac.ir}
\And
    Sadegh~Bolouki\\
    Dept. of Electrical and Computer Engineering\\
    Tarbiat Modares University\\
    Tehran, Iran\\
    \texttt{bolouki@modares.ac.ir}
}
\theoremstyle{definition}
\newtheorem{theorem}{Theorem}
\begin{document}
\maketitle

\begin{abstract}

    A promising approach toward efficient energy management is non-intrusive load monitoring (NILM), that is to extract the consumption profiles of appliances within a residence by analyzing the aggregated consumption signal. Among efficient NILM methods are event-based algorithms in which events of the aggregated signal are detected and classified in accordance with the appliances causing them. The large number of appliances and the presence of appliances with close consumption values are known to limit the performance of event-based NILM methods. To tackle these challenges, one could enhance the feature space which in turn results in extra hardware costs, installation complexity, and concerns regarding the consumer's comfort and privacy. This has led to the emergence of an alternative approach, namely semi-intrusive load monitoring (SILM), where appliances are partitioned into blocks and the consumption of each block is monitored via separate power meters.
    
    While a greater number of meters can result in more accurate disaggregation, it increases the monetary cost of load monitoring, indicating a trade-off that represents an important gap in this field. In this paper, we take a comprehensive approach to close this gap by establishing a so-called notion of ``disaggregation difficulty metric (DDM),'' which quantifies how difficult it is to monitor the events of any given group of appliances based on both their power values and the consumer's usage behavior. Thus, DDM in essence quantifies how much is expected to be gained in terms of disaggregation accuracy of a generic event-based algorithm by installing meters on the blocks of any partition of the appliances. Experimental results based on the REDD dataset illustrate the practicality of the proposed approach in addressing the aforementioned trade-off.
\end{abstract}
\keywords{Clustering \and
energy management \and
entropy \and
non-intrusive load monitoring \and
semi-intrusive load monitoring}
\section{Introduction}
In-depth studies on the residential energy management strategies show that appliance load monitoring and detailed information about the consumption pattern of consumers play decisive roles in reducing overall consumption \cite{gopinath2020energy}. The most accurate way to extract this information is intrusive load monitoring (ILM), in which a sensory device is installed on each appliance of a house to measure, record and report its consumption \cite{yang2020event}. However, installing a meter on each individual appliance of residential buildings is expensive and time consuming and invades the privacy of the household \cite{liu2019low}. 

To tackle these challenges, non-intrusive load monitoring (NILM) was first proposed in 1992 by Hart \cite{hart1992nonintrusive}. NILM aims to extract the consumption profile of each appliance from the aggregated signal measured by existing meters via purely analytical methods. Since NILM does not interfere with the existing building infrastructure, it is more practical than ILM methods and cost and time-efficient \cite{zhang2019new}. Various NILM methods have been proposed based on different features of appliances \cite{henao2015approach}. In particular, considering the existing smart meters, the active power data is more accessible than other features such as reactive power data. Thus, the interest in the NILM research based on the active power of appliances as their specific features
have grown over the past decade \cite{mueller2016accurate}.

Active power-based NILM studies can be roughly divided into two main classes, event-based methods and sample-based methods \cite{mueller2016accurate}. Event-based methods are focused on detecting significant variations in the power signal called events and classifying them as per the specific mode transitions of appliances causing them \cite{lu2019hybrid}. The general idea behind sample-based NILM methods is that the consumption behavior of appliances can be represented as a finite-state machine and the total power signal can be disaggregated based on the learned-model of state transitions of appliances \cite{zhao2018improving}. Compared to the sample-based methods, event-based methods are more direct and comprehensive and often have lower computational complexities. Therefore, they have attracted an increasing attention in recent years \cite{rehman2019event}.

While many event-based NILM approaches have been developed and validated on small groups of appliances, the performance of these methods decreases with increasing the number of appliances \cite{tang2015distributed} and in the presence of multi-mode appliances \cite{ egarter2014paldi} or appliances with close power values \cite{liu2016non}. To address these challenges, conventional NILM approaches have been focused on feature space improvements such as utilizing light \cite{srinivasan2013fixturefinder}, ON/OFF duration of appliances \cite{dinesh2019residential}, and weather and temperature features \cite{ma2017toward}. However, the use of additional meters to record environmental features or consumer's usage behavior requires manual effort and time in addition to the high costs \cite{gopinath2020energy}. Moreover, the accuracy of these approaches highly depends upon the training dataset, and achieving high accuracy requires a high volume of training dataset, gathering which raises privacy concerns for consumers \cite{machlev2018modified}. 


These challenges lead to a compromised approach toward appliance load monitoring, the so-called semi-intrusive load monitoring (SILM), which extracts the consumption of appliances from multiple aggregated signals \cite{xu2018classifier}. In other words, in the SILM approach, the set of appliances of a building is divided into multiple subgroups and a meter is installed to measure the aggregated signal of appliances in each subgroup \cite{dash2019semi}. A load disaggregation problem should then be solved for each subgroup. SILM methods inherit the advantages of ILM and NILM and potentially allow for an  optimal trade-off between the computational complexity, accuracy and cost \cite{hosseini2017non, tang2015distributed}. Evaluating the performance of SILM and NILM with real datasets prove that utilizing a very limited number of meters, the SILM approach significantly improves the accuracy of power disaggregation for large-scale and multi-mode appliances \cite{langevin2020crosstalk}.

Taking into account the limitations presented by the electrical network topology, main challenges of SILM are loosely described as obtaining a minimal number of meters and proper partitioning of appliances which make possible extracting the consumption profiles of individual appliances with high accuracy \cite{tang2015distributed}. Besides, the types and number of appliances, the infrastructure of residential buildings that imposes constraints and shrinks the set of feasible subgroups of appliances \cite{dash2019novel}, and the willingness of consumers to pay for the meters are not the same across the board and must be taken into consideration \cite{xu2018classifier}. Therefore, the proper number of meters should be determined individually for each consumer. 
An optimal number of meters, as well as the corresponding partition of the appliances, in essence depend on how distinguishable the appliances are with respect to their power consumption.
In \cite{pochacker2015proficiency}, authors quantified the distinguishability of the set of appliances assuming fixed power values associated with operation modes of appliances. However, due to voltage fluctuations in the grid, the consumption of an appliance is not fixed.

In this paper, we formulate the so-called {\it disaggregation difficulty metric} (DDM) in part based on the distribution of the active power of appliances. A second factor that also plays an important role in the DDM formulation is the consumer's usage behavior. For instance, appliances that are rarely used would contribute less to the DDM. In the second phase of this study, to pursue a scalable solution for obtaining an optimal number of meters for each consumer, we quantify the relationship between the number of meters, or meters' cost, and the DDM. 
It is worthwhile to note that the two factors serving as the bases of the DDM formulation are not immediately available and have to be extracted from the training dataset consisting of discrete time-series data of the consumption signal of type I appliances (those with only two operation modes (ON/OFF modes)) or type II appliances (those with more than two operation modes). Fig. \ref{framework} demonstrates the entire framework of this paper and the significance of our contributions is highlighted below.
\begin{figure}[t]
    \centering
    \includegraphics[width=.5\linewidth]{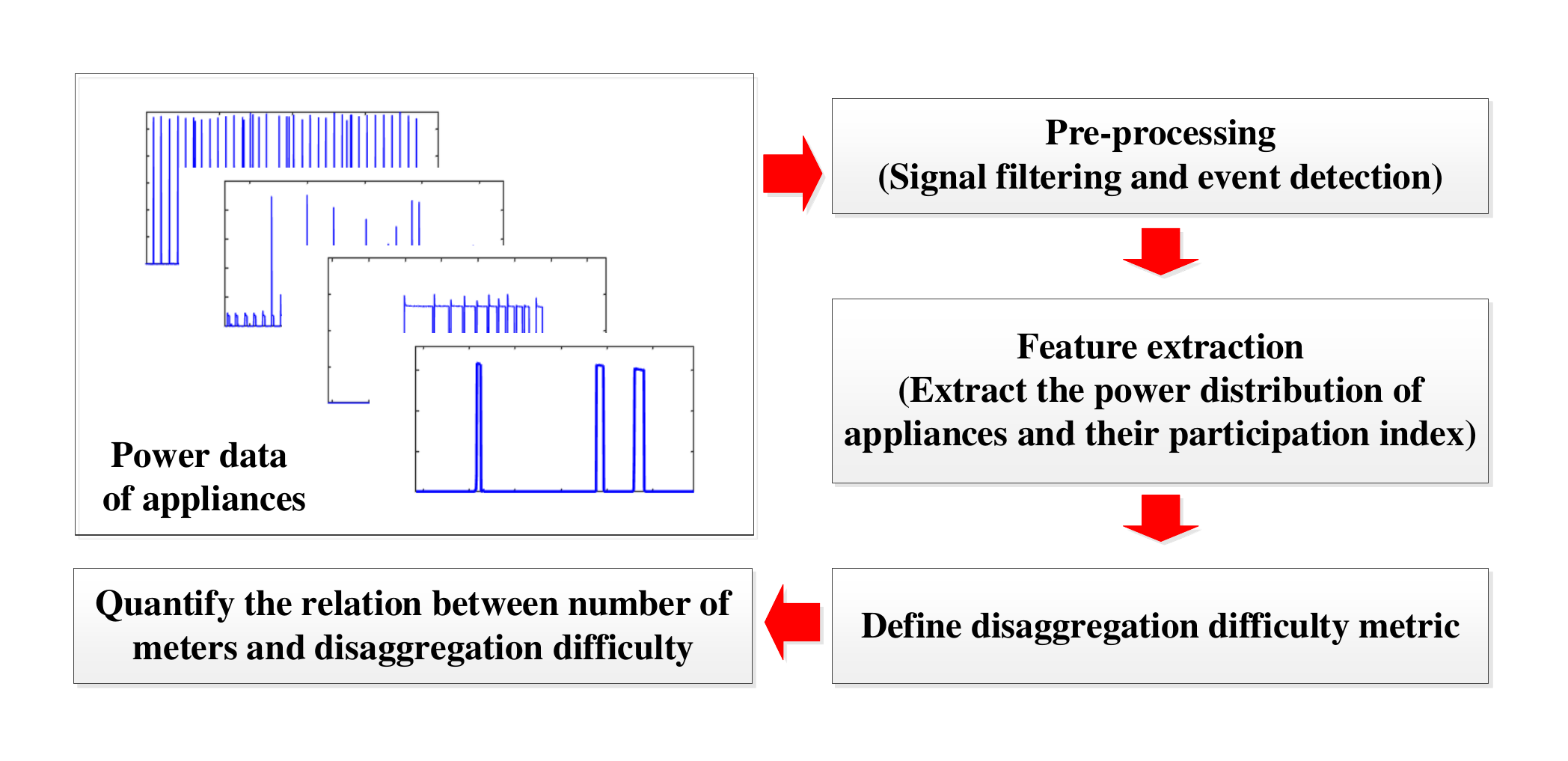}
    \caption{Framework of this paper}
    \label{framework}
    \vspace{-.1in}
\end{figure}
\begin{itemize}
    \item To the best of our knowledge, this is the first paper to quantify the difficulty of distinguishing of appliances for event-based NILM classification methods.
    \item Instead of considering a fixed value for the power consumption of an appliance in each of its operation modes, we extract a distribution of power from the training dataset, which makes the results more realistic.
    \item In contrast to \cite{pochacker2015proficiency}, the DDM formulation also depends on the usage behavior of consumers rather than being rigidly dependent upon the active power of appliances.
    \item The trade-off between disaggregation difficulty and the number of meters is illustrated in this paper. 
    Consequently, under the cost constraints and infrastructural limitations of the building of interest, an optimal number of meters and the associated partition of appliances can be obtained for the most accurate load monitoring possible.
    \item Low-frequency data is used for feature extraction, which meets practical considerations and also leads to a load monitoring process with a relatively low computational complexity.
    \end{itemize}




The remainder of this paper is organized as follows. Section~\ref{Pre-Processing} describes the pre-processing phase and feature extraction.
Section~\ref{Dis def} then conceptualizes and formulates the DDM. Afterwards, in Section~\ref{Relation}, the relationship between the number of meters and the DDM is made explicit. Finally, the effectiveness of the proposed metric is evaluated in Section~\ref{Simulatin Study} using the Reference Energy Disaggregation Data Set (REDD) \cite{kolter2011redd}, before Section~\ref{Conclusion} concludes the paper.

\section{Pre-processing and Feature Extraction}\label{Pre-Processing}
 %
 In this section, filtering the power signals and feature extraction will be discussed first.
 Then, a clustering-based method is utilized to obtain probable mode transitions of each appliance, which will be described in Subsection~\ref{mode extraction}. Based on these results, two main features required for the DDM formulation are extracted in Subsection~\ref{Feature extraction}.
 


\subsection{Signal Filtering and Event Detection} \label{signal filtering}
Generally speaking, event detection can be defined as the process of identifying the possible occurrence of appliances' operating mode transitions in the aggregated signal or signal of each appliance \cite{alcala2017event}. In other words, an event in a signal is defined as the transition from a steady state to a different steady state \cite{xu2018classifier}. Event detection methods relying on fixed thresholds do not perform well for different datasets and should be adjusted for different sets of appliances based on their power values \cite{ma2017toward}. In this paper, we utilize our previously proposed method of event detection in \cite{azizi2020novel}, which is free of fixed thresholds and characterizes an event as an \say{uncommon} value change in the signal. In the first stage of this algorithm, the min/max ratio between any two consecutive sampled values of the power signal ($P(t)$) is calculated. Then, they are subtracted from 1, saved in a vector $M$ and the standard deviation of $M$ is computed. Finally, for any $t$, if $M(t)$ is greater than the calculated standard deviation, it is considered as an outlier and $t$ is saved as the outlier occurrence instance in vector $M_o$ of outlier instances.

\subsection{Extracting Appliances' Mode Transitions}\label{mode extraction}
As opposed a majority of the existing literature that extracts an appliance's operation modes and their respective power values visually using samples of the appliance's power signal in the training set, or based off of the spec sheet of the appliance, we utilize a clustering-based approach for extracting plausible mode transitions of appliances and respective power values from the training dataset. Generally speaking, clustering is an unsupervised learning technique that aims to properly partition data samples of a dataset into a number of groups called clusters based on their similarity. A commonly used method of clustering is the $K$-means algorithm which is fast in comparison with other methods and has decent performance in segmenting high volume of data points \cite{alpaydin2020introduction}. Its procedure is based on minimizing $f_1$ defined as 
        \begin{equation}\label{costfun}
            f_1 = \sum_{j=1}^K \sum_{i: z_i \in C_j} \left\| z_i - \bar{z}^j \right\|_2^2, \bar{z}^j = \frac{1}{|C_j|}\sum_{i:z_i \in C_j} z_i,
        \end{equation}
    where $z_i$'s denote the data points, $C_j$'s are the clusters, $\bar{z}^j$'s represent their centroids, and $K$ is the number of clusters. A fundamental part of this algorithm is selecting the number $K$ of clusters. A large number $K$ renders the clustering inconsequential and results in a high computational complexity of the ensuing process that would otherwise benefit from a meaningful clustering, while a small number $K$ leads to clusters with dissimilar members. In this paper, to select a proper number $K$, the elbow method is applied to the events dataset \cite{azizi2020residential}. This method computes $f_1$ by gradually increasing the number of clusters and when the segmenting cost varies a little in comparison with the next step, it stops.

\subsection{Feature Extraction} \label{Feature extraction}
Two main features of appliances that affect the performance of a generic event-based disaggregation method are (i) their power consumption at their operation modes, and (ii) their participation in the aggregated power consumption, which is dependent upon the consumer's usage behavior. Previous research on computing the difficulty of disaggregation, \cite{pochacker2015proficiency}, considered fixed values for power consumption of appliance modes, which is unrealistic due to the uncertainties in the grid. Moreover, \cite{pochacker2015proficiency} does not take the consumer's usage behavior of appliances into account. In the following, we extract the aforementioned two features from the training dataset.
\subsubsection{Transitions' Power Distributions}
The consumption power of appliances in each operation mode is subject to various uncertainties of significant magnitudes such as frequent and large voltage fluctuations caused by the variability of renewable energies, thermal noise, and many other physical parameters \cite{henao2016active}. All these uncertainties cause power fluctuations with finite variance. We assume that the power value associated with appliance $i$'s mode transition $j$ can be approximated by a Gaussian distribution \cite{7605545},
\begin{equation}\label{dist}
    f_{ij}(x,\mu_{ij},\sigma_{ij}^2)=\frac{1}{\sqrt{2\pi}\sigma_{ij}^2}e^{-\frac{(x-\mu_{ij})^2}{2\sigma_{ij}^2}},
\end{equation}
where $\mu_{ij}$ and $\sigma_{ij}^2$ are the mean and variance, respectively, calculated given the data points available for that transition. This assumption is made for a neater presentation of the results and, as will be discussed in Subsection~\ref{discussion}, can be removed.

\subsubsection{Computing Participation Indices} \label{P index}
It should be noted that appliances with overlapping power values are not used with the same frequency. Thus, in addition to the appliances' power consumption in their operation modes, their participation in the aggregated power signal also closely affects the disaggregatability of the aggregated signal and the accuracy of an event-based load disaggregation method. In other words, appliances that are rarely in use should have less effect on the disaggregation difficulty of the aggregated signal. 

In this regard, a novel index called \textit{participation index} is defined for each mode transition of the appliances. Given a mode transition $j$ of an appliance $i$, this index, denoted by $\pi_{ij}$, is calculated as
    \begin{equation}\label{p_co}
        \pi_{ij} = \frac{Num_{T_{ij}}}{Num_{TOT}},
    \end{equation}
where $Num_{T_{ij}}$ and $Num_{TOT}$ stand for the number of events caused by this specific mode transition and the total number of events of aggregated signal, respectively. Since each event of the aggregated signal is caused by a unique mode transition, it should be clear that
    \begin{equation}\label{prob1}
        \sum_{i=1}^{N}\sum_{j=1}^{t_i}\pi_{ij}=1,
    \end{equation}
where $t_i$ denotes the number of probable mode transitions of appliance $i$. Finally, one intuitively expects that appliances with close power values and close participation indices significantly complicate the disaggregation process.
\section{Disaggregation Difficulty Metric Formulation}\label{Dis def}

    Formulating the DDM is carried out in three steps, (i) obtaining the probability distributions of events in the aggregated signal given the distribution of the power value associated with a generic event of each appliance and its participation index, (ii) calculating the probability that a sample event is caused by a given appliance, and (iii) characterizing the DDM based on these probability distributions and probabilities.

\subsection{Distribution of a Generic Event in Aggregated Signal}
In the process of formulating the DDM, we need to generate a sample event of the aggregated signal. This is carried out considering the conditional distribution of the aggregated signal given the power distribution of each appliance's mode transition and its participation index as obtained in Subsection~\ref{P index}. By applying the Bayes’ rule, if $T$ represents a generic event, the probability distribution function of its value is obtained as
\begin{equation}\label{prob}
    f_T(\alpha)=\sum_{i=1}^{N}\sum_{j=1}^{t_i}\pi_{ij}f_{ij}(\alpha),
\end{equation}
where $N$ and $t_i$ stand for the total number of appliances and total number of transitions of appliance $i$, respectively; $\pi_{ij}$ shows the participation index of $j$th transition of appliance $i$; and $f_{ij}$ is the Gaussian distribution of the associated power value of transition $j$ of appliance $i$. Recalling \eqref{prob1} and noticing that $\int f_{ij}(\alpha)d\alpha=1$ for any $i,j$, one verifies that $\int f_T(\alpha)d\alpha=1$.

As an example, we consider a simple experiment with 3 appliances and assume that the characteristics of their Gaussian power distribution and their participation indices are extracted based on \eqref{dist} and \eqref{p_co} and given in Table~\ref{Modes}. Fig.~\ref{test} then demonstrates the probability distribution function by which the value of a generic event of the aggregated signal is generated.
\begin{table}[t]\centering
\caption{Power distributions of transitions}
\begin{tabular}{|c|c|c|c|c|c|c|}
\Xhline{2\arrayrulewidth}
Apps. & \multicolumn{3}{c|}{Transition 1} & \multicolumn{3}{c|}{Transition 2} \\ \Xhline{2\arrayrulewidth}
           & $\mu$      & $\sigma$   & $\pi$   & $\mu$      & $\sigma$   & $\pi$    \\ \hline
$1$          & 210     & 10   & 0.193    & 400    & 20   & 0.096 \\ \hline
$2$          & 220     & 12  & 0.322    & 1080     & 30  & 0.258 \\ \hline
$3$          & 1100    & 40   & 0.129  & -       & -  & - \\ \Xhline{2\arrayrulewidth}
\end{tabular} 
\label{Modes}
\end{table}
\begin{figure}[t]
    \centering
    \includegraphics[width=.5\linewidth]{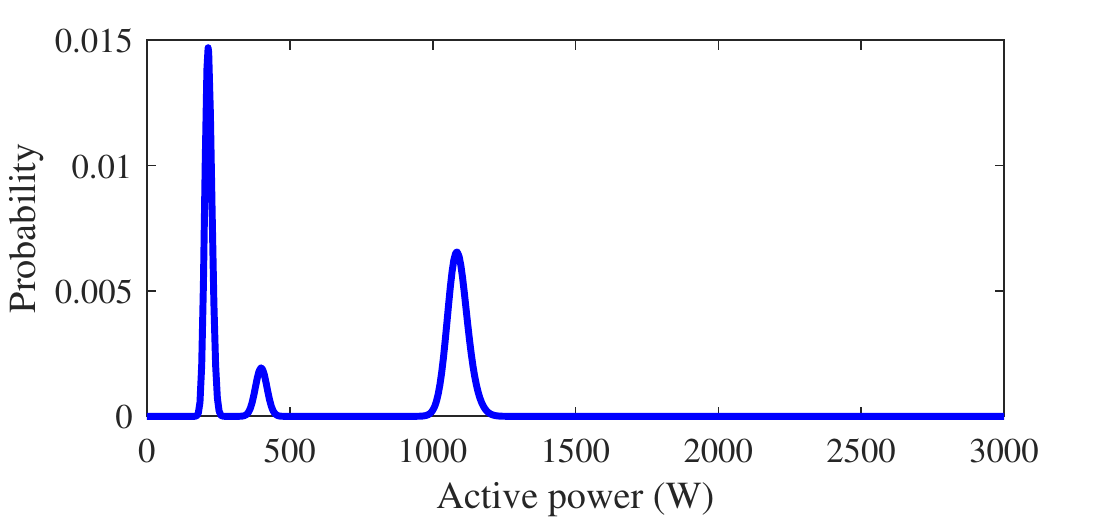}\vspace{.1in}
    \caption{Probability distribution of the value of a generic event for the example of Table~\ref{Modes}.}
    \label{test}
\end{figure}

\subsection{Probability-based Disaggregation}\label{Sec_Probability}

    Given an event $T$ with value $\alpha$, we now calculate the probability that $T$ has been caused by mode transition $j$ of appliance $i$ as
\begin{equation}\label{belon3}
 P(T_{ij}\rightarrow T\,|\,T=\alpha)=\frac{\pi_{ij}f_{ij}(\alpha)}{f_T(\alpha)}.
\end{equation}
For the example of Table~\ref{Modes}, Fig.~\ref{belong} depicts for different values of $\alpha$, the probability that it is caused by each appliance's mode transition. Is should be clear in Fig.~\ref{belong} that, power values that correspond to unique mode transitions in Fig.~2, such as 400W, bear little uncertainty as to which mode transition causes them. However, the overlapping distributions of $T_{22}$ and $T_{31}$ means that an example event with value of 1090W is difficult to disaggregate as it may be caused by either of the two mode transitions with fairly high probability.
\begin{figure}[t]
    \centering
    \includegraphics[width=.5\linewidth]{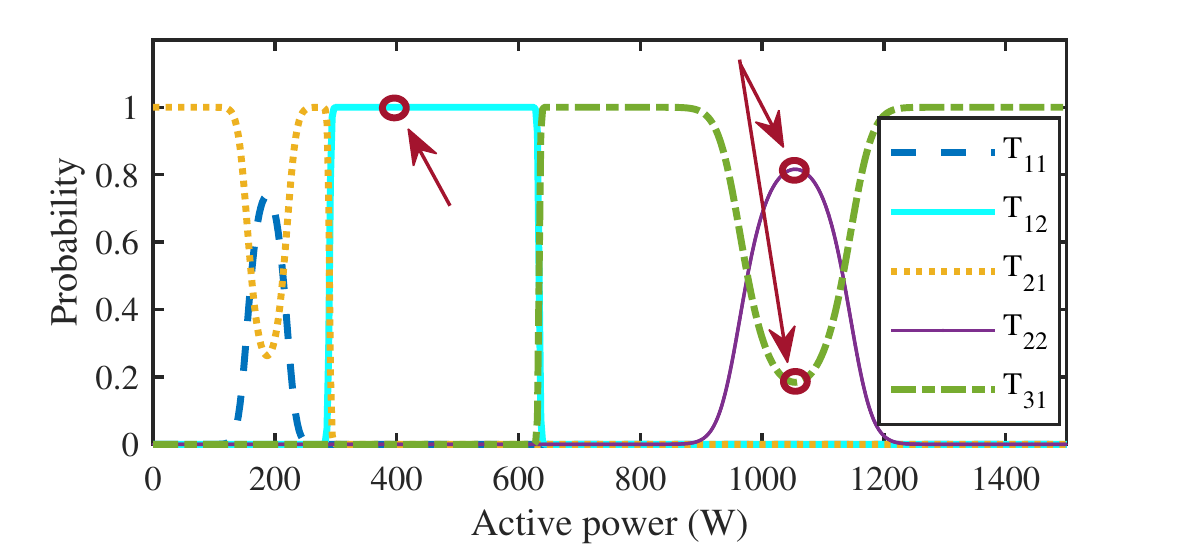}\vspace{.1in}
    \caption{Probability that a given event is caused by different mode transitions, with respect to the value of the event. Arrows point to the event values 400W and 1090W.}
    \label{belong}
\end{figure}
\subsection{Disaggregation Difficulty Metric (DDM)}
    Among various types of methods, the concept of entropy has been used as a viable measure for the average uncertainty of random variables with discrete values. In the following, we aim to intuitively introduce the concept of entropy in our case. We recall that $P(T_{ij}\rightarrow T\,|\,T=\alpha)$ is the conditional probability that mode transition $j$ of appliance $i$ has caused a given event with the value $\alpha$. For any $\alpha$, these conditional probabilities form a probability distribution, the entropy of which quantifies the uncertainty as to which mode transition causes an event with the value $\alpha$, that is
\begin{equation}\label{DD}
 e_\alpha=-\sum_{i=1}^N\sum_{j=1}^{t_i} p(T_{ij}\rightarrow T\,|\,T=\alpha)\log (T_{ij}\rightarrow T\,|\,T=\alpha)),
\end{equation}
where $t_i$ is the total number of transitions of appliance $i$. The DDM can now be defined as the average uncertainty over all values of $\alpha$, that according to \eqref{prob} and \eqref{DD} is
\begin{equation}\label{8}
    DDM = \int f_T(\alpha) e_\alpha d\alpha.
\end{equation}
One notices that \eqref{8} describes a weighted average of uncertainty where each $e_\alpha$ is understandably weighted by the probability that the event takes the value $\alpha$. The DDM captures the challenge of disaggregating the aggregated power signal of a set of appliances. This metric is expected to confirm that the closer power values of different appliances in the set, the more difficult it is to extract their power profiles from the aggregated signal. For the example of Table~\ref{Modes}, the DDM equals 0.24. However, revising the parameters of the probability distributions associated with the transitions of appliance 2 to $(\mu_{21},\pi_{21})=(600,6)$ and $(\mu_{22},\pi_{22})=(2000,8)$, no pair of transitions with significantly overlapping power distributions remains and one obtains that $DDM=0.09$.

\section{Relationship Between the Number of Meters and the DDM} \label{Relation}

    The most accurate but costly process of load monitoring involves ILM, that is carried out by installing a meter on each appliance. At the other end of the spectrum stands NILM, which leads to the least accurate yet cheapest load monitoring. Within the spectrum between these two monitoring regimes are those based on the SILM approach, which aims at achieving an optimal trade-off between the overall cost of monitoring and the monitoring accuracy by partitioing various appliances into groups, herein called {\it blocks}, so as to monitor every block with a single, exclusive meter. This optimal trade-off problem can be translated to obtaining an optimal number of meters since it has a direct relationship with both the overall cost and the monitoring accuracy. While it is somewhat straightforward to estimate how much an extra meter adds to the overall cost, its contribution to the monitoring accuracy is not at all clear and calls for an extension of the DDM to the SILM setting.
    


As mentioned above, any SILM framework involves partitioning the appliances into blocks. Fig.~\ref{Partitioning} demonstrates all 5 possible partitions of 3 appliances for different numbers of blocks (meters). For information regarding the number of partitions of a set of size $N$, widely known as the Bell number $B_N$, the interested reader is referred to \cite{aigner1999characterization}.
\begin{figure}[t]
    \centering
    \includegraphics[width=.5\linewidth]{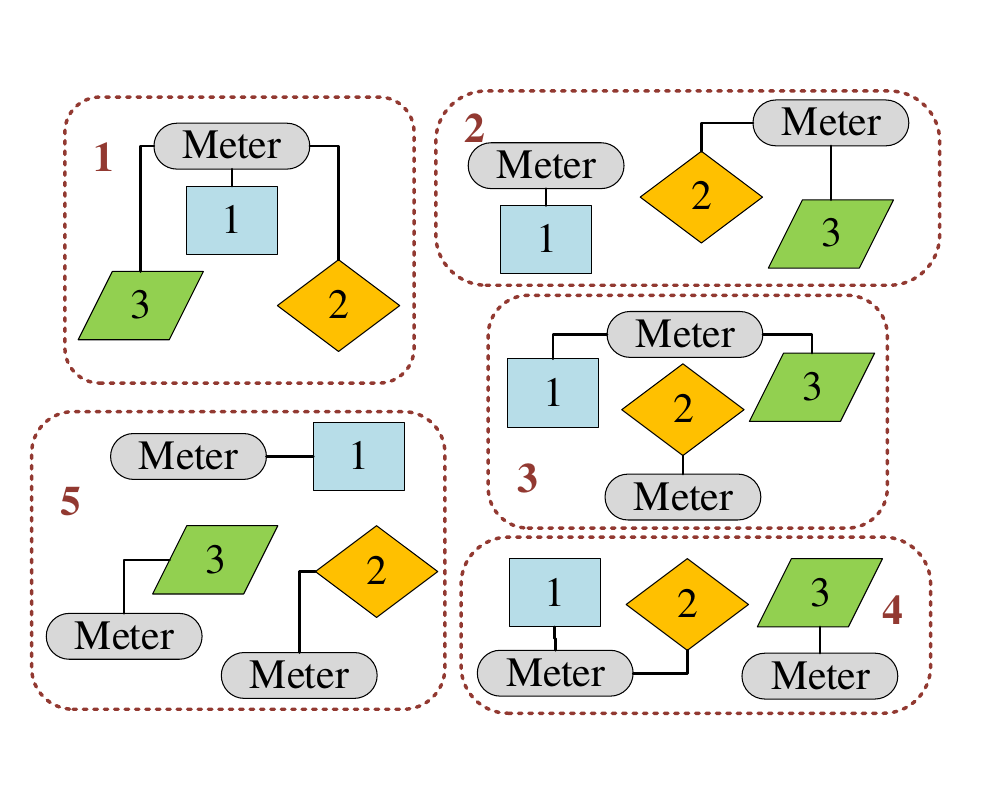}\vspace{-.1in}
    \caption{All partitions of 3 appliances with one, two and three meters}
    \vspace{-.15in}
    \label{Partitioning}
\end{figure}
In the rest of this section, we extend the DDM notion to the SILM setting, argue its well-definedness, and discuss how it can be utilized to obtain an optimal number of meters. Let $S_1,\ldots,S_b$ form a partition of appliances in the SILM paradigm, meaning that a meter has been assigned to measure the aggregated power of the appliances within each block $S_k$, $1 \leq k \leq b$. Given this partition, the DDM is calculated as follows. First, we focus on a single but arbitrary block, say $S_k$, and assume that a sample event of the value $\alpha$ has been caused by an appliance in this block. In view of \eqref{DD}, the following value captures the uncertainty as to which appliance of $S_k$ has caused this event:
\begin{align}
    e_{\alpha}(S_k) =
    & - \sum_{i\in S_k}\sum_{j=1}^{t_i}\Bigg[\left(\frac{p(T_{ij}\rightarrow T\,|\,T=\alpha)}{p(S_k\rightarrow T\,|\,T=\alpha)}\right)\nonumber\\
    &\hspace{.78in}\log\left(\frac{p(T_{ij}\rightarrow T\,|\,T=\alpha)}{p(S_k\rightarrow T\,|\,T=\alpha)}\right)\Bigg],\label{e_alpha(S_k)}
\end{align}
where
\begin{equation}\label{sk}
    p(S_k\rightarrow T\,|\,T=\alpha) = \sum_{i\in S_k}\sum_{j=1}^{t_i}p(T_{ij}\rightarrow T\,|\,T=\alpha)
\end{equation}
is the probability that a given event of the value $\alpha$ in the aggregated signal has been caused by the block $S_k$. For one to make sense of \eqref{e_alpha(S_k)}, we point out that the value
\begin{equation}
    \frac{p(T_{ij}\rightarrow T\,|\,T=\alpha)}{p(S_k\rightarrow T\,|\,T=\alpha)}= p(T_{ij}\rightarrow T\,|\,T=\alpha,\,S_k\rightarrow T),
\end{equation}
where $i \in S_k$, is the probability that a given event of the value $\alpha$ caused by an appliance in $S_k$ has been in fact caused by transition $j$ of appliance $i$.

Now, a weighted average over $e_\alpha(S_k)$ for $1 \leq k \leq b$ well captures the uncertainty associated with a given event of value $\alpha$ in the aggregated signal as to which appliance has caused it, that is
\begin{equation}\label{new e_alpha}
    e_\alpha \triangleq \sum_{k=1}^{b}  p(S_k \rightarrow T\,|\,T=\alpha) e_\alpha(S_k).
\end{equation}

Finally, having obtained the value $e_\alpha$ for any $\alpha$, the DDM is once again defined as
\begin{equation}\label{8 again}
    DDM = \int f_{T}(\alpha)e_\alpha d\alpha,
\end{equation}
where $f_T(\alpha)$ is defined in \eqref{prob} and remains the same for the SILM case, while $e_\alpha$ is defined in \eqref{new e_alpha}.

    We now argue that this extension of the DDM to the SILM paradigm is in line with an important, intuitive expectation one would have of such a notion. More specifically, one expects that if a partition is finer than another partition, that is if each of its blocks is contained in some block of the other partition, its corresponding DDM should be less than that of the other partition. In other words, further splitting blocks of a partition and adding extra meters accordingly should reduce disaggregation difficulty. In the following theorem, we state that our SILM-extension of the DDM in fact meets this expectation.

\begin{theorem}
    If a partition of the appliances is finer than another partition, its corresponding DDM is less than or equal to that of the other partition.
\end{theorem}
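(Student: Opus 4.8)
The plan is to reduce the claim to a pointwise statement in $\alpha$ and then integrate. The key observation is that, for any fixed value $\alpha$, the conditional probabilities $p(T_{ij}\rightarrow T\mid T=\alpha)$ sum to $1$ over all transitions $(i,j)$ (immediate from \eqref{belon3} together with \eqref{prob}), so they form a genuine probability distribution over the set of all mode transitions. I would treat this as the distribution of a random variable $X$ ranging over transitions. A partition $\{S_k\}$ of the appliances then induces a coarser random variable $Y$ recording which block $X$ falls into; crucially, $Y$ is a \emph{deterministic function} of $X$, since each transition belongs to a unique appliance and each appliance to a unique block. Under this identification, $e_\alpha(S_k)$ in \eqref{e_alpha(S_k)} is exactly the entropy of $X$ conditioned on the event that $X$ lands in $S_k$, and $e_\alpha$ in \eqref{new e_alpha} is precisely the conditional entropy $H(X\mid Y)$.

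With this reformulation, the heart of the argument is the monotonicity of conditional entropy under refinement. Let $\mathcal{S}$ be finer than $\mathcal{S}'$, with corresponding block-valued variables $Y$ and $Y'$. Because every block of $\mathcal{S}$ sits inside a block of $\mathcal{S}'$, the coarse label $Y'$ is itself a deterministic function of the fine label $Y$. Conditioning on $Y$ therefore already subsumes the information in $Y'$, so that $H(X\mid Y)=H(X\mid Y,Y')\le H(X\mid Y')$, the inequality being the standard fact that conditioning cannot increase entropy. This delivers the pointwise bound $e_\alpha^{\mathcal{S}}\le e_\alpha^{\mathcal{S}'}$ for every $\alpha$.

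If I prefer to keep the argument elementary and self-contained rather than invoking information-theoretic identities, I would instead establish the pointwise bound block by block via the grouping property of Shannon entropy. Fixing a coarse block $S'$ split into fine sub-blocks $S_1,\dots,S_m$, and writing $w_l=p(S_l\rightarrow T\mid T=\alpha)/p(S'\rightarrow T\mid T=\alpha)$ for the relative weights of the sub-blocks within $S'$, the grouping identity gives $e_\alpha(S')=-\sum_{l=1}^m w_l\log w_l+\sum_{l=1}^m w_l\,e_\alpha(S_l)$. Since $-\sum_l w_l\log w_l\ge 0$, multiplying through by $p(S'\rightarrow T\mid T=\alpha)$ yields $\sum_l p(S_l\rightarrow T\mid T=\alpha)\,e_\alpha(S_l)\le p(S'\rightarrow T\mid T=\alpha)\,e_\alpha(S')$, and summing over all coarse blocks $S'$ reproduces $e_\alpha^{\mathcal{S}}\le e_\alpha^{\mathcal{S}'}$.

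Finally, because $f_T(\alpha)\ge 0$, integrating the pointwise inequality against $f_T$ as in \eqref{8 again} preserves it and gives $DDM^{\mathcal{S}}\le DDM^{\mathcal{S}'}$, which is the claim. I expect the only genuinely delicate point to be the verification that the block assignment is a well-defined deterministic function of the transition (so that the chain-rule, equivalently the grouping, step legitimately applies); once that is in place, both the entropy inequality and the final integration over $\alpha$ are routine given the nonnegativity of $f_T$.
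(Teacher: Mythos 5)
Your proof is correct, and it takes a genuinely different---and in fact more complete---route than the paper's. The paper only proves the base case of a two-block partition refining the single-block (NILM) partition: writing out $e'_\alpha$ as in \eqref{e2}, it notes that the block masses appearing as denominators are at most $1$, so each term $-p_i\log\left(p_i/D\right)$ is bounded above by $-p_i\log p_i$, giving $e'_\alpha\le e_\alpha$ pointwise; the general statement is then delegated to an unwritten induction on sequential block splittings. You instead identify $e_\alpha(S_k)$ in \eqref{e_alpha(S_k)} as the entropy of the transition-valued variable $X$ conditioned on the event $X\in S_k$, and $e_\alpha$ in \eqref{new e_alpha} as the conditional entropy $H(X\mid Y)$ with $Y$ the block label; since refinement makes the coarse label $Y'$ a deterministic function of the fine label $Y$, the chain $H(X\mid Y)=H(X\mid Y,Y')\le H(X\mid Y')$ settles an arbitrary pair of nested partitions in one step, with no induction. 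Your grouping-identity variant is the elementary counterpart of the same argument and is, in effect, the block-local generalization of the paper's denominator trick: the paper's term-wise bound is your identity specialized to a coarse block equal to the whole transition set, with the discarded nonnegative term $-\sum_l w_l\log w_l$ being exactly what the log inequality throws away. What your approach buys is (i) the full statement for arbitrary refinements, including the step the paper leaves implicit, and (ii) a structural reading of the theorem---the SILM DDM is an expected conditional entropy, so monotonicity under refinement is precisely ``conditioning cannot increase entropy.'' Both arguments conclude identically by integrating the pointwise bound against $f_T(\alpha)\ge 0$ as in \eqref{8 again}; the only hygiene points, which you flag correctly, are that the conditional probabilities sum to one (immediate from \eqref{belon3} and \eqref{prob}) and that the block assignment of each transition is well defined.
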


\begin{proof}

    We only prove the theorem for the case of an arbitrary two-block partition, which is finer than the single-block partition (the NILM case). The general case can be proved in a very similar fashion using induction, noticing that any set of finer blocks can always be obtained by sequentially splitting the original blocks. We start with the single-block partition and assume that $\{T_1,\dots,T_M\}$ represents the set of all transitions. In view of \eqref{DD}, for any value $\alpha$ we have
\begin{equation}\label{e1}
    e_\alpha =
    - \sum_{i=1}^{M} p(T_{i}\rightarrow T\,|\,T=\alpha)\log (p(T_{i}\rightarrow T\,|\,T=\alpha))
\end{equation}
Now, turning to the arbitrary two-block partition, we assume that the two blocks are in such a way that the transitions in the set $\{T_1,\dots,T_l\}$ are monitored with the first meter and the rest of transitions, i.e., $\{T_{l+1},\dots,T_M\}$, are monitored with the second meter. Now, in view of \eqref{new e_alpha}, while considering \eqref{e_alpha(S_k)} and \eqref{sk}, the new value of $e_\alpha$, denoted by $e'_\alpha$, is obtained as
\begin{align}
    e'_\alpha
    =
    &- \sum_{i=1}^{l} p(T_{i}\rightarrow T\,|\,T=\alpha)\log \frac{p(T_{i}\rightarrow T\,|\,T_\alpha)}{\sum_{j=1}^lp(T_{j}\rightarrow T\,|\,T_\alpha)}\nonumber\\
    &- \sum_{i=l+1}^{M} p(T_{i}\rightarrow T\,|\,T=\alpha)\log \frac{p(T_{i}\rightarrow T\,|\,T_\alpha)}{\sum_{j=l+1}^Mp(T_{j}\rightarrow T\,|\,T_\alpha)}
    \label{e2}
\end{align}
Since both denominators that appear in the right-hand-side of \eqref{e2} are less than or equal to 1, it is straightforward to conclude that $e'_\alpha \leq e_\alpha$. Noticing from \eqref{8} that the DDM corresponding to the single-block partition is an average over $e_\alpha$ with the weights $f_T(\alpha$), and according to \eqref{8 again}, the DDM for the two-block partition is an average over $e'_\alpha$ with the same weights, the proof is now complete.
\end{proof}
For the example of \ref{Sec_Probability}, DDM for different partitions is given in Table~\ref{T2}, which confirms that the most challenging disaggregation happens when in the NILM case where there is only one meter, while increasing the number of meters reduces the difficulty of disaggregation. In particular, as shown in Table~\ref{Modes}, appliance~2 has two operation modes, the first of which has overlapping power values with appliance~1, while the second of which has overlapping power values with appliance~3. This is also highlighted in Table~\ref{T2}, which implies that separating appliance~2 and directly monitoring its consumption significantly decreases the disaggregation difficulty to the point where it virtually achieves the ILM accuracy where three meters are utilized. Therefore, in summary, Table~\ref{T2} shows which 2-meter assignment is best with respect to the accuracy of disaggregation, while it deems the 3-meter assignment unnecessary as it increases the monitoring costs without meaningfully improving the monitoring accuracy. Table~\ref{T3} shows the minimum DDM for each number of meters and its corresponding cost based on \cite{xu2018classifier}. Consumers can choose the proper number of meters based on their willingness to pay for the meters' cost, the constraints imposed by the building infrastructure, and the DDM.

\begin{table}[t]\centering 
\caption{DDM for different partitions of the appliances}
\begin{tabular}{|c|l|c|}
\Xhline{2\arrayrulewidth}
Number of meters  & Meter assignment                                                             & $DDM$    \\ \Xhline{2\arrayrulewidth}
1                  & Meter1: 1,2,3                                                      & 0.24 \\ \hline
\multirow{3}{*}{2} & \begin{tabular}[l]{@{}l@{}}Meter1: 1\\ Meter2: 2,3\end{tabular}       & 0.17 \\ \cline{2-3} 
                   & \begin{tabular}[l]{@{}l@{}}Meter1: 2\\ Meter2:  1,3\end{tabular}       & 0.12 \\ \cline{2-3} 
                   & \begin{tabular}[l]{@{}l@{}}Meter1: 3\\ Meter2: 2,3\end{tabular}       & 0.18 \\ \hline
3                  & \begin{tabular}[l]{@{}l@{}}Meter1: 1\\ Meter2: 2\\ Meter3: 3\end{tabular} & 0.12 \\ \Xhline{2\arrayrulewidth}
\end{tabular}\label{T2}
\end{table}
\begin{table}[t]\centering
\caption{Cost and lowest achievable $DDM$ for different numbers of meters}
\begin{tabular}{|c|c|c|c|}
\Xhline{2\arrayrulewidth}
Number of meters & 1    & 2    & 3    \\ \Xhline{2\arrayrulewidth}
$DDM$              & 0.24 & 0.12 & 0.12 \\ \hline
Meter cost (\$)        & 200  & 400  & 600  \\ \Xhline{2\arrayrulewidth}
\end{tabular}\label{T3}
\end{table}
\section{Simulation Study} \label{Simulatin Study}
In this section, we make concrete the disaggregation difficulty concept using the REDD dataset \cite{kolter2011redd}.
This dataset consists of high, low-resolution power data. Features based on the high-frequency measurements are difficult to obtain due to the limitation of the storage space and the communication bandwidth \cite{xu2018classifier}.  
Seven appliances are considered in this paper, consisting of 5 type~I appliances (oven (OV), microwave (MW), kitchen outlets (KO), bathroom GFI (BGFI), and washer/drier (W/D)) and 2 type~II appliances (refrigerator (RFG) and dishwasher (DW)). The power consumption of these appliances in a day is illustrated in Fig.~\ref{3D_test_days}. 
\begin{figure}[t]
    \centering
    \includegraphics[width=.6\linewidth]{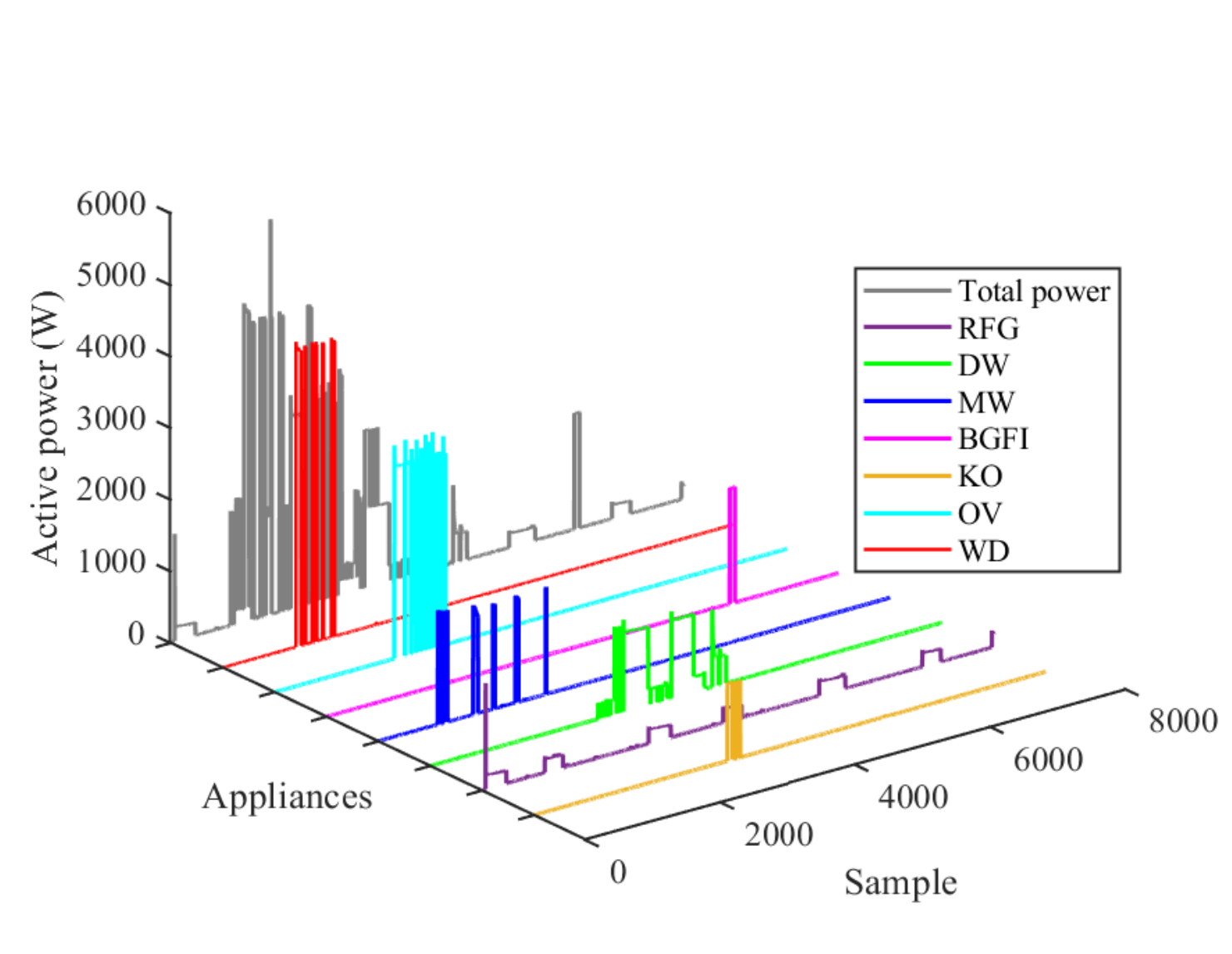}
    \caption{Power consumption of appliances vs the total power signal}
    \label{3D_test_days}
\end{figure}
In the following subsections, after filtering the signal of each appliance, detecting events, and obtaining the probable transitions of appliances, the power distribution of their transitions and their participation indices are extracted. Then, the DDM is computed based on the two extracted sets of features. Finally, the relationship between the DDM and the number of meters is made explicit.

\subsection{Pre-processing and Feature Extraction}

    To filter the power consumption signals of individual appliances in the training dataset and extract their respective events, the method described in Subsection~\ref{signal filtering} is utilized. 
    Then, to extract transitions of each appliance, the $k$-means algorithm is applied to the detected events of the signal of each appliance, and the proper number of transitions is obtained from the elbow method as described in Subsection~\ref{mode extraction}.
After transitions extraction, the power distribution of each transition is extracted based on \eqref{dist}. Then, based on \eqref{p_co}, the participation index of each mode of all appliances is calculated. Fig.~\ref{3Ddist} and Table~\ref{Modes} show the characteristics of the normal distribution of each transition of appliances and their participation indices.
\begin{figure}[t]
    \centering
    \includegraphics[width=.5\linewidth]{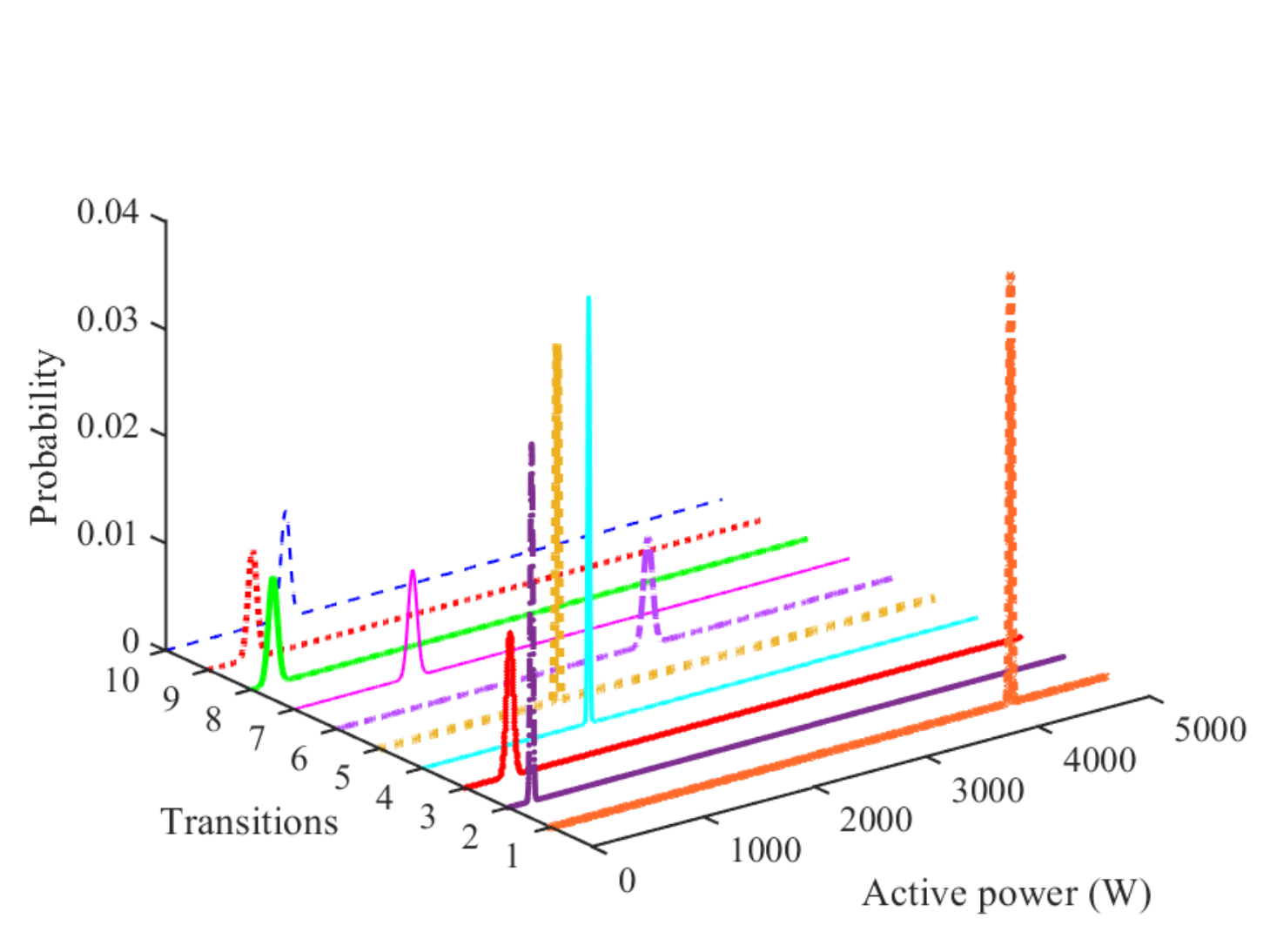}
    \caption{Power distribution of all transitions}
    \label{3Ddist}
\end{figure}
\begin{table}[t]\centering
\caption{Extracted features of appliances}
\begin{tabular}{|c|c|c|c|}
\Xhline{2\arrayrulewidth}
Appliance & $\mu$                                       & $\sigma$                                & $\pi$                                    \\ \Xhline{2\arrayrulewidth}
DW        & \begin{tabular}[c]{@{}c@{}}200\\ 400\\ 1000\end{tabular} & \begin{tabular}[c]{@{}c@{}}10\\ 20\\ 50\end{tabular} & \begin{tabular}[c]{@{}c@{}}0.1460\\ 0.0243\\ 0.0609\end{tabular} \\ \hline
RFG       & \begin{tabular}[c]{@{}c@{}}200\\ 400\end{tabular}        & \begin{tabular}[c]{@{}c@{}}20\\ 40\end{tabular}      & \begin{tabular}[c]{@{}c@{}}0.2705\\ 0.0845\end{tabular}     \\ \hline
WD        & 2800                                                        & 37                                                   & 0.0372                                                 \\ \hline
MW        & 1500                                                        & 10                                                    & 0.2272                                                 \\ \hline
KO        & 1070                                                        & 32                                                    & 0.0811                                                \\ \hline
OV        & 4142                                                        & 27                                                    & 0.0426                                                 \\ \hline
BFGI      & 1600                                                        & 12                                                    & 0.0257                                                 \\ \Xhline{2\arrayrulewidth}
\end{tabular}
\end{table}
\subsection{Disaggregation Difficulty Metric}

 The distribution of an event in the aggregated signal given the power distribution of appliances and their participation indices is calculated based on \eqref{p_co} and shown in Fig.~\ref{total_dist}. 
\begin{figure}[t]
    \centering
    \includegraphics[width=.5\linewidth]{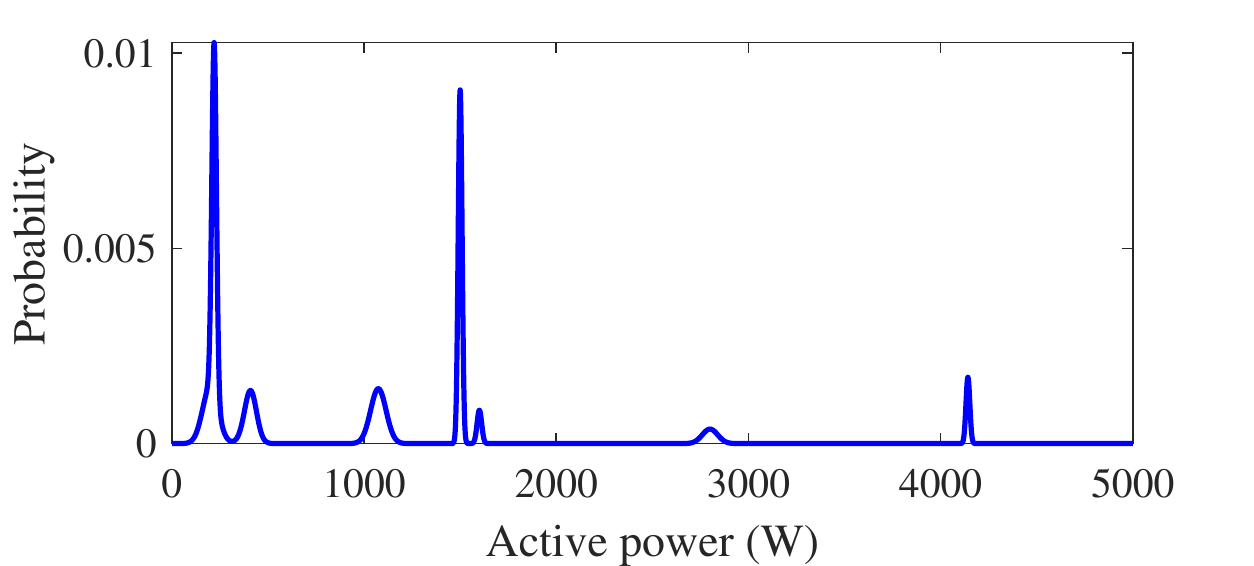}\vspace{.1in}
    \caption{Probability of occurrence of each event given the power distribution of appliances and their participation index}
    \label{total_dist}
\end{figure}

Then, the probability that an event is caused by a specific appliance is obtained from \eqref{belon3} and illustrated in Fig. \ref{belonging}. 
\begin{figure}[t]
    \centering
    \includegraphics[width=.5\linewidth]{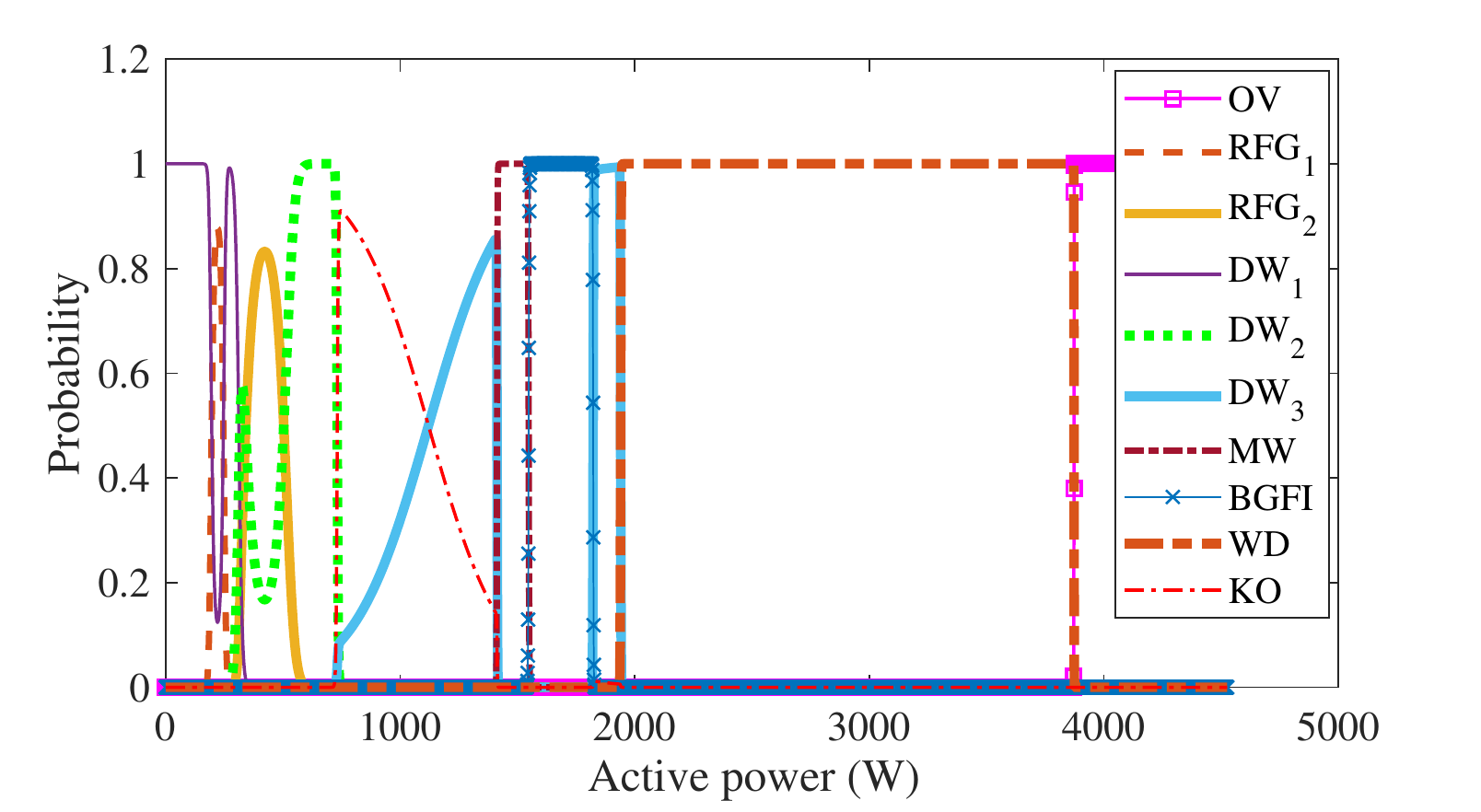}\vspace{.08in}
    \caption{Probability that an event of any power value is caused by each appliance}
    \label{belonging}
\end{figure}
As it is shown, the event with power values which are not in overlapped margins of appliances (such as 4100 W) are more distinguishable and they are assigned to the specific appliance with the probability close to 1. The entropy of each power value in the total signal is computed based on \eqref{DD} and the disaggregation difficulty is computed based on \eqref{8}, which is 0.38 for this set of appliances.
\subsection{Relationship Between the Number of Meters and the DDM}

    In the presence of appliances with close power values in this dataset, event-based NILM classification results in low accuracy. Therefore, with grouping appliances and increasing the number of aggregated signal we can improve the results. 
    The total number of partitions of 7 appliances is 877. Fig.~\ref{DD_sensors} shows the $DDM_{TOT}$ for different number of meters and all 877 types of meter assignment on the appliances. 
\begin{figure}[t]
    \centering
    \includegraphics[width=.5\linewidth]{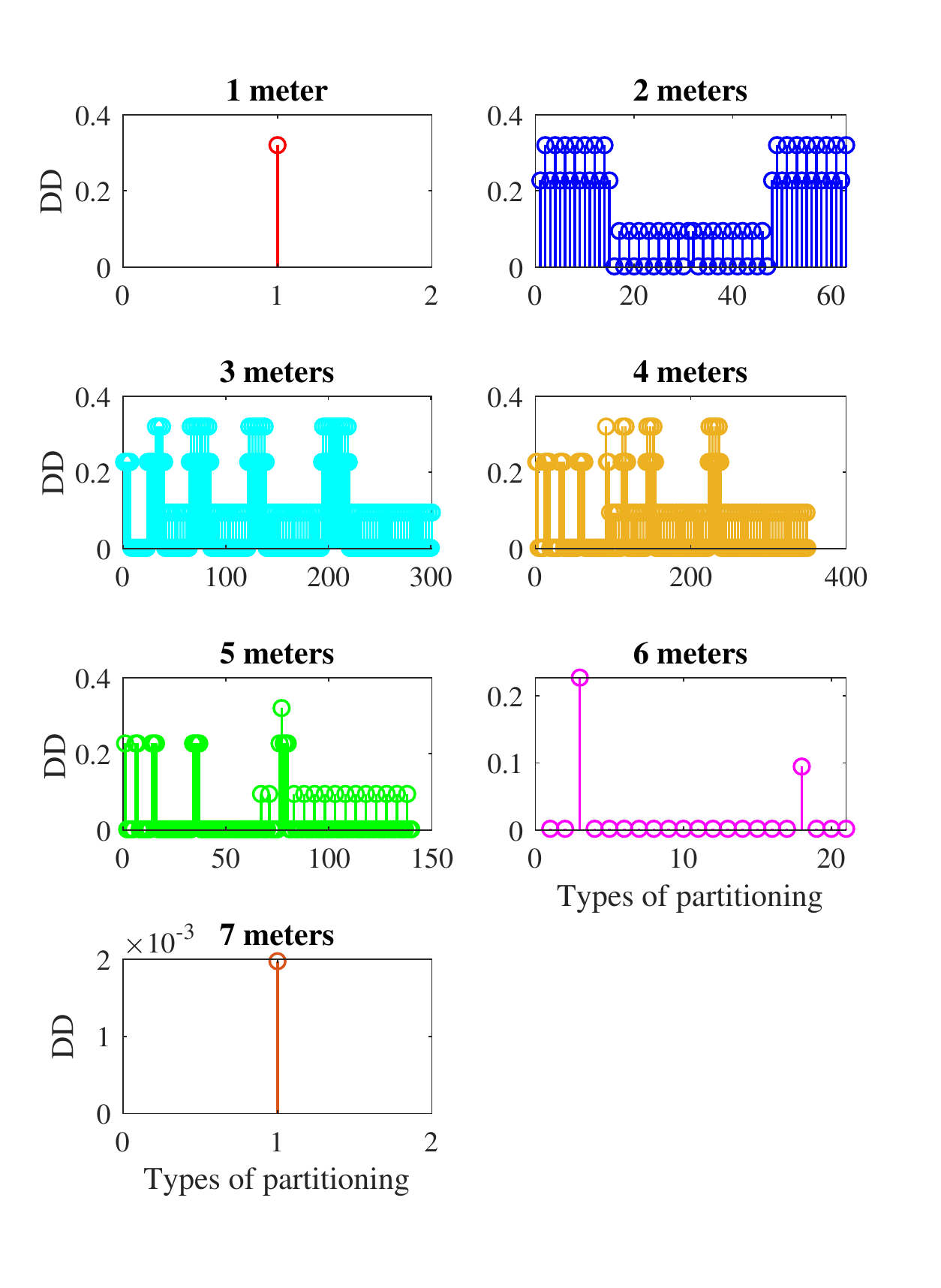}\vspace{-.16in}
    \caption{Disaggregation difficulty vs different types of partitioning}
    \label{DD_sensors}
\end{figure}
\subsection{Discussion}\label{discussion}

    In Subsection~\ref{Feature extraction}, the power distribution of any transition is assumed to be Gaussian for convenience, which may not be a fair assumption in practice. Thus, we now obtain each transition's power distribution via a weighted moving average (WMA) applied on its histogram in the training dataset. The WMA assigns greater weights to the data points, while the weights smoothly fade away as it moves away from the data points. The WMA is calculated by multiplying each observation in the data set by a predefined weighting factor \cite{lucas1990exponentially}. As an example, Fig.~\ref{hist_ma} illustrates the extracted WMA-based power distribution for the microwave. Based on these WMA-based distributions, the DDM is obtained 0.27, which happens to be lower than the DDM given the Gaussian distributions of transitions (DDM=0.32).
\begin{figure}[t]
    \centering
    \includegraphics[width=0.5\linewidth]{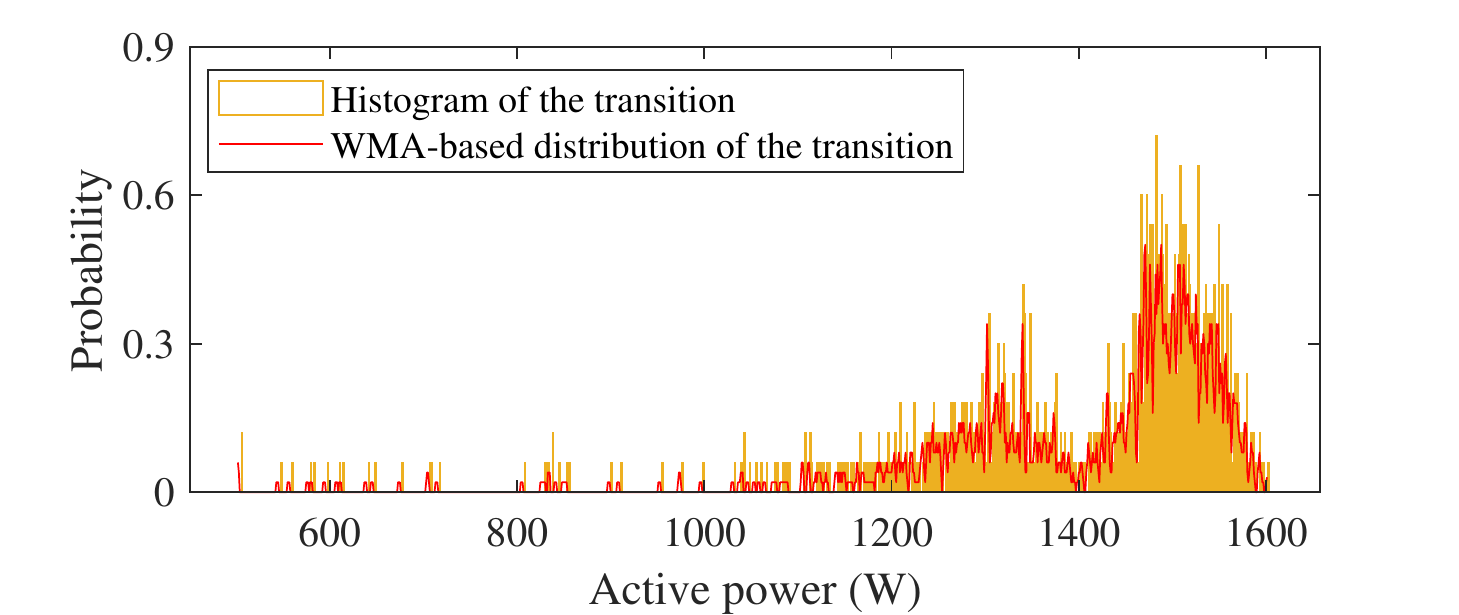}\vspace{.1in}
    \caption{WMA-based distribution of microwave's power values}
    \label{hist_ma}
    \vspace{-.1in}
\end{figure}
\section{Concluding Remarks and Future Work} \label{Conclusion}

    This paper proposed a novel entropy-based concept that quantifies the disaggregation difficulty in event-based classification NILM methods. 
    The proposed method not only takes into account the non-fixed power values of appliances but also adapts to the consumers' usage behavior and the probability of usage of appliances. In the second stage of this paper, the relationship between the number of aggregated meters and the disaggregation difficulty is also quantified. The proper number of meters can be chosen by adapting the existing infrastructure of the building, protecting consumers' comfort and prohibiting extra metering costs. Moreover, based on the proposed method, the effect of deployment of the new appliance which is inherently unavoidable in real-world on disaggregation difficulty can be measured.

Future work will focus on defining disaggregation difficulty based on the total signal without considering the training dataset. Such an approach reduces both instrumentation costs for recording the training dataset and mitigates privacy concerns. Furthermore, we aim to quantify the relation between disaggregation difficulty considering different features of appliances such as reactive power.


\balance
\bibliography{NILM_Second_paper}
\bibliographystyle{IEEEtran}

\end{document}